\documentclass[11pt]{article}

\usepackage[T1]{fontenc}
\usepackage[a4paper,margin=1in]{geometry}
\usepackage{amsmath,amssymb,amsthm,mathtools}
\usepackage{newtxtext,newtxmath}
\usepackage{microtype}
\usepackage{booktabs}
\usepackage{enumitem}
\usepackage{xcolor}
\usepackage{xurl}
\usepackage{hyperref}
\usepackage[nameinlink,capitalise,noabbrev]{cleveref}

\hypersetup{
  colorlinks=true,
  linkcolor=blue!55!black,
  citecolor=green!35!black,
  urlcolor=blue!55!black,
  pdftitle={Union-Find with Constant-Time Deletions Across the Optimal Worst-Case Tradeoff},
  pdfauthor={Hanqing Li and Ze Hong}
}

\newtheorem{theorem}{Theorem}
\newtheorem{lemma}[theorem]{Lemma}

\theoremstyle{definition}
\newtheorem{definition}[theorem]{Definition}
\newtheorem{invariant}[theorem]{Invariant}
\theoremstyle{remark}

\crefname{invariant}{Invariant}{Invariants}
\Crefname{invariant}{Invariant}{Invariants}

\newcommand{\MakeSet}{\mathsf{MakeSet}}
\newcommand{\Union}{\mathsf{Union}}
\newcommand{\Find}{\mathsf{Find}}
\newcommand{\Delete}{\mathsf{Delete}}
\newcommand{\LocalDelete}{\mathsf{LocalDelete}}
\newcommand{\Absorb}{\mathsf{Absorb}}
\newcommand{\rank}{\operatorname{rank}}
\newcommand{\val}{\operatorname{val}}
\newcommand{\VAL}{\operatorname{VAL}}

\newcommand{\parent}{\operatorname{parent}}

\newcommand{\Clist}{\mathsf{CList}}
\newcommand{\NLlist}{\mathsf{NLList}}
\newcommand{\DFSlist}{\mathsf{DFSList}}
\newcommand{\Roots}{\mathcal{R}}

\title{Union-Find with Constant-Time Deletions\\
       Across the Optimal Worst-Case Tradeoff}
\author{Hanqing Li\\Peking University \and Ze Hong\\Tsinghua University}
\date{September 2026}

\begin{document}
\maketitle

\begin{abstract}
We consider union-find with deletions, where the representation and the cost of a
query must depend on the current number of live elements rather than on the
number of elements ever created.  For every integer parameter $k\ge 2$, we give
a linear-space data structure supporting $\MakeSet$ in $O(1)$ worst-case time,
$\Union$ in $O(k)$ worst-case time, $\Delete$ in $O(1)$ worst-case time, and
$\Find$ in
\[
  O\!\left(1+\frac{\log n}{\log k}\right)
\]
worst-case time for a set containing $n$ live elements.  A deletion is given
only an element handle, not the identifier of its current set.

The construction separates global rank growth from local deletion repair.  A
logical set is represented by fewer than $k$ disjoint ranked trees.  Equal-level
trees are collected without physical linking until $k$ certificates are
available, at which point one base-$k$ carry is performed in $O(k)$ time.  Each
member tree uses a strengthened form of the full/reduced local rebuilding
scheme of Ben-Amram and Yoffe.  A $q$-ary value argument, with $q=3/2$, couples
the local trees to the base-$k$ certificates and yields the stated current-size
height bound.  A small but essential rule handles high-rank stars, a state that
the base-$k$ carry can create but that does not arise directly in the binary-rank
construction underlying the earlier local scheme.
\end{abstract}

\paragraph{Keywords.}
Union-find; disjoint-set union; deletions; worst-case data structures; local
rebuilding; time tradeoffs.

\section{Introduction}

The classical union-find problem maintains a partition under creation of
singletons, union of two sets, and queries for the set containing a given
element.  Its amortized theory is governed by path compression and linking by
rank~\cite{Tarjan1975,TarjanVanLeeuwen1984}; its single-operation theory admits
a tradeoff between the worst-case costs of union and find
~\cite{Blum1986,Smid1988,AlstrupBenAmramRauhe1999}.  In one convenient
parameterization, union costs $O(k)$ and find costs $O(\log_k n)$.

Deletion changes the problem in an important way.  Merely marking an element
as deleted leaves the physical structure proportional to historical size and
therefore does not provide bounds in terms of the current set size.  Kaplan,
Shafrir, and Tarjan formalized this requirement and obtained, among other
results, $O(k)$ union and $O(\log_k n)$ find and delete in the worst
case~\cite{KaplanShafrirTarjan2002}.  Alstrup, G\o rtz, Rauhe, and Thorup reduced
the deletion time to $O(\log^* n)$ while preserving the tradeoff
~\cite{AlstrupGortzRauheThorup2003}.  Constant worst-case deletion was later
obtained at the constant-union, logarithmic-find endpoint
~\cite{AlstrupGortzRauheThorupZwick2005,AlstrupThorupGortzRauheZwick2014},
and Ben-Amram and Yoffe gave a particularly simple full/reduced-tree
implementation~\cite{BenAmramYoffe2011,BenAmramYoffe2012}.  Extending constant
deletion to the entire worst-case union/find range was explicitly noted as a
direction for future work~\cite[Section~8.3]{Gortz2005}.

\begin{table}[t]
  \centering
  \caption{Worst-case bounds for union-find with deletions.  The parameter $n$
  is the current number of live elements in the relevant set.}
  \label{tab:comparison}
  \begin{tabular}{@{}lccc@{}}
    \toprule
    Result & $\Union$ & $\Find$ & $\Delete$ \\
    \midrule
    Kaplan--Shafrir--Tarjan~\cite{KaplanShafrirTarjan2002}
      & $O(k)$ & $O(\log_k n)$ & $O(\log_k n)$ \\
    Alstrup--G\o rtz--Rauhe--Thorup~\cite{AlstrupGortzRauheThorup2003}
      & $O(k)$ & $O(\log_k n)$ & $O(\log^* n)$ \\
    Alstrup et al.~\cite{AlstrupThorupGortzRauheZwick2014}
      & $O(1)$ & $O(\log n)$ & $O(1)$ \\
    This paper
      & $O(k)$ & $O(1+\log_k n)$ & $O(1)$ \\
    \bottomrule
  \end{tabular}
\end{table}

\paragraph{Our approach.}
The obstacle is not local deletion repair: a constant number of pointer changes
can preserve a suitable value invariant.  The obstacle is rank growth.  A
binary union-by-rank history certifies only $2^r$ mass at rank $r$, whereas a
$\log_k n$ height bound needs a $k^r$ certificate.  Increasing local degrees to
$k$ would make deletion repair depend on $k$.

We instead keep local degree requirements constant and implement rank growth
at a virtual layer.  A logical set owns a cluster of fewer than $k$ disjoint
member trees of one level.  Combining equal levels concatenates two short root
lists.  Only when at least $k$ member certificates have accumulated do we raise
one physical root and absorb the others.  No parent edge is introduced between
uncarried roots, so the virtual cluster contributes only one final owner-pointer
step to find.

\paragraph{Contributions.}
The paper makes the following points precise.
\begin{enumerate}[leftmargin=2em]
  \item We formulate a local full/reduced deletion module that is valid for all
  ranked trees produced by the new carry operation, including high-rank stars.
  \item We give the virtual root-cluster construction and prove its base-$k$
  value certificate under every operation.
  \item We derive the worst-case time and current-size space bounds, including
  the element-handle-only deletion interface.
\end{enumerate}

The result here concerns worst-case bounds.  We do not claim that the same
construction preserves the optimal inverse-Ackermann amortized find bound.

\section{Model and Main Result}
\label{sec:model}

The data structure maintains pairwise disjoint, nonempty logical sets of live
elements.  It supports the following operations.
\begin{description}[leftmargin=8em,style=nextline]
  \item[$\MakeSet(x)$] Create the singleton set $\{x\}$ and return its set
  identifier.
  \item[$\Union(A,B,C)$] Replace the distinct sets named $A$ and $B$ by their
  union, named $C$.  The input identifiers are destroyed.
  \item[$\Find(x)$] Return the identifier of the set currently containing $x$.
  \item[$\Delete(x)$] Remove $x$ from its set.  The operation receives an
  element handle but not a set identifier, and it does not change the
  identifier of a nonempty remaining set.
\end{description}
If deletion empties a singleton set, that set ceases to exist.  As usual,
operations involving destroyed set identifiers or deleted element handles are
invalid.

We work in the pointer-machine model or the standard word-RAM model with
$\Theta(\log(N+k))$-bit words, where $N$ is the number of live objects plus
identifiers needed by the current execution.  Pointer manipulation, comparison
of ranks and counters, and arithmetic on ranks take constant time.  The
quantities $k^r$ and $\VAL$ are analytical certificates and are not stored or
evaluated by the data structure.  The integer $k\ge2$ is fixed when an instance
is initialized.  Our references to the optimal tradeoff use the standard
logarithmic-word setting and, for a set of size $n$, the nonvacuous parameter
range $2\le k\le n$; the upper bound itself remains valid when $k>n$.

\begin{theorem}[Main theorem]
\label{thm:main}
For every integer $k\ge2$, there is a union-find-delete data structure using
$O(N)$ words for $N$ current live elements such that $\MakeSet$ takes $O(1)$,
$\Union$ takes $O(k)$, $\Delete$ takes $O(1)$, and $\Find(x)$ takes
\[
  O\!\left(1+\frac{\log n}{\log k}\right)
\]
worst-case time, where $n$ is the current number of live elements in the set
containing $x$.
\end{theorem}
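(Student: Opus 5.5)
The proof is constructive: I would build the structure in three layers and verify each operation against a single potential-style certificate.

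\emph{Local layer.} Each member tree is a ranked tree maintained by a strengthened Ben-Amram--Yoffe full/reduced scheme. Every node $v$ of rank $r$ gets the value $\val(v)=q^{r}$ with $q=3/2$. The invariant is that a reduced tree has height at most $2$, and otherwise the root of rank $r$ has total value at least a constant fraction of $q^{r}$ times the value bound of its children. I would restate this scheme so that it does not assume the tree arose from binary linking. The only information used is the node ranks and the $\Clist$, $\NLlist$, $\DFSlist$ bookkeeping. The extra rule is for a high-rank star, meaning a root of large rank whose children are all leaves. When deletion leaves such a star too small, it is collapsed to a reduced tree, or its root rank is lowered, in $O(1)$ time. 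With this rule, $\LocalDelete$ uses $O(1)$ pointer changes and preserves the value invariant.

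\emph{Virtual layer.} A logical set is represented as a root list $\Roots$ of fewer than $k$ member trees, all of the same level $\ell$. Every root stores a pointer to a shared owner record, and that record holds the set identifier. A $\Union$ at equal levels concatenates the two root lists. If the combined list reaches $k$ or more roots, a carry takes $k$ of them, raises one physical root to level $\ell+1$, and makes the others its children via $\Absorb$, all in $O(k)$ time. The owner pointers are relinked in $O(k)$ time; this works only because the list is short. At unequal levels, the lower-level cluster is absorbed beneath a root of the higher-level cluster, also in $O(k)$ time. The key global invariant is
\[
\VAL(\text{set}) \;\ge\; c\,k^{\ell}
\]
for a constant $c>0$, where $\VAL$ measures live mass through the local values. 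To prove it, I would show that a carry multiplies the certified mass by $k$ while the local invariant loses only a constant factor. A deletion lowers the value of one local tree by a bounded amount, and the local scheme's repair restores the relation between that value and the tree's rank. A deletion that empties a member tree removes that tree from $\Roots$ in $O(1)$ time using the doubly linked list. Deletions reach the owner through the element handle alone, so no set identifier is required.

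\emph{Bounds.} $\Find$ walks up at most the physical height of its member tree and then follows one owner pointer. Physical height is $O(\ell)$, because local trees contribute $O(1)$ per level: carries add one level each, and the local trees have bounded-depth structure between consecutive ranks. Since $\VAL\le O(n)$ and $\VAL\ge c\,k^{\ell}$, we get $\ell\le \log_k(n/c)$, so $\Find$ costs $O(1+\log n/\log k)$. Space is $O(N)$ because every node is live or is one of $O(1)$ structural nodes charged to live nodes, as in the local scheme.

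\emph{Main obstacle.} I expect the hardest step to be proving that the $q$-ary local invariant, with $q=3/2$, still certifies $k^{\ell}$ mass once carries produce roots whose children are $k-1$ trees of equal level rather than binary links. The high-rank stars created by carries are exactly where the original argument fails. My first attempt would be to define $\VAL$ as the sum of local values weighted by $k^{\text{carry level}}$ and to verify, case by case, that each local repair (full, reduced, and star collapse) never decreases this weighted sum below the threshold.
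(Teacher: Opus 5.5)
Your architecture matches the paper's: full/reduced member trees, fewer than $k$ virtual roots with owner pointers, a base-$k$ carry, and absorption across levels. The quantitative core, however, fails in three places.

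\textbf{The value function.} With $\val(v)=q^{\rank(v)}$, a relink changes no value and every deletion strictly lowers $\VAL$. So no local repair can ``restore'' a $k^\ell$ certificate. The paper sets $\val(v)=q^{\rank(\parent(v))}$. Then moving two children of a nonroot $y$ to a grandparent of rank $R$ changes value by at least $-3q^{R-1}+2q^R=0$. Moving three grandchildren to a root of rank $R$ changes it by at least $-q^R-3q^{R-1}+3q^R=0$. This is what pins $q=3/2$, and it gives the precise local fact you need: a non-reduced output keeps its root rank and does not lose value.

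\textbf{The upper bound on $\VAL$.} The claim $\VAL\le O(n)$ is false for $q$-weighted values, since one node can carry $q^{\ell+1}$. If it held, $\VAL$ would essentially be a live count, and your invariant would assert $\Omega(k^\ell)$ live elements at level $\ell$; nothing in the deletion procedure enforces that. The correct bound is $\VAL(T)\le n_Tq^{\ell+1}$, because every parent rank is at most the root rank $\ell+1$. The height bound then comes from $k^\ell\le n_Tq^{\ell+1}$, i.e.\ $\ell\le\log(qn_T)/\log(k/q)$. This is $O(1+\log n/\log k)$ because $k/q$ is bounded away from $1$.

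\textbf{Per-set versus per-member certificate.} A per-set invariant $\VAL(S)\ge ck^\ell$ cannot pay for a carry: merging two level-$\ell$ sets certifies only $2ck^\ell$. You need the certificate per member tree, $\VAL(T)\ge k^\ell$ for each of the $\mu<k$ trees, with each tree full and of root rank $\ell+1$ when $\ell>0$. Then $t\ge k$ trees certify $tk^\ell\ge k^{\ell+1}$. Also, the carry must absorb all $t-1\le2k-3$ other trees, not just $k-1$. Otherwise leftover level-$\ell$ trees remain beside the new level-$(\ell+1)$ tree.

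\textbf{The star rule.} Your star rule does not preserve the value invariant. Resetting the root rank to one leaves value at most $qn_T$, which need not reach $k^\ell$. Any other lowering of the root rank violates the requirement that a positive-level member have root rank exactly $\ell+1$.

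The paper accepts this loss locally and repairs it at the cluster layer. $\LocalDelete$ returns a member token exactly when the output is empty or reduced. In those cases the root is at most one step from the touched nodes, so no upward walk is needed.
\begin{itemize}
  \item If the set has positive level and another member root $p$ exists, the reduced tree $T$ is merged by $\Absorb(p,T)$. This is value-monotone and leaves $p$'s certificate intact.
  \item If $T$ was the only member, the set keeps its name but is relabeled as level $0$. This is valid because every nonempty tree has $\VAL\ge1=k^0$.
\end{itemize}
Without this step, a positive-level cluster can hold a rank-one tree that certifies nothing, and the next carry is unsound.
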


We give the new construction for $k\ge4$.  For $k=2,3$, the constant-deletion
structure of Alstrup et al. or its simplified version gives $O(\log n)$
worst-case find and constant union, which equals the claimed asymptotic bound
for these two values of $k$~\cite{AlstrupThorupGortzRauheZwick2014,BenAmramYoffe2011}.

\section{A Robust Local Deletion Module}
\label{sec:local}

This section isolates the constant-time deletion machinery.  It follows the
full/reduced scheme of Ben-Amram and Yoffe~\cite{BenAmramYoffe2011}, with an
additional root guard needed because our union operation may directly create a
high-rank star.

\subsection{Ranked trees and auxiliary lists}

Elements are \emph{extrinsic}: an element object and its tree node are distinct
objects with mutual pointers.  Every physical node is occupied by exactly one
live element.  Deleting an element therefore deletes exactly one node after
possibly exchanging two element--node associations.

Every node $v$ has a nonnegative integer rank.  If $v$ is not a root, then
\begin{equation}
  \rank(\parent(v))>\rank(v).
  \label{eq:rank-order}
\end{equation}
Leaves have rank zero.  The following two shapes are allowed.

\begin{definition}[Full and reduced trees]
\label{def:full-reduced}
A rooted ranked tree is \emph{full} if every node is either a rank-zero leaf or
has at least three children, and every parent has strictly larger rank than its
children.  A tree is \emph{reduced} if it is either one rank-zero node, or a
height-one tree whose root has rank one and whose other nodes are rank-zero
leaves.
\end{definition}

A large height-one star can be both full and reduced when its root has rank one.
Each nonleaf node has a doubly linked child list $\Clist$, ordered from right
to left.  Each member-tree root has a doubly linked list $\NLlist$ of its nonleaf
children.  Each member tree also has a cyclic doubly linked preorder list
$\DFSlist$, starting at its root and visiting children in that order.  Every
node stores its own list records.  The splices below use locally available
interval endpoints and cost $O(1)$.  We use the corrected splice direction
from~\cite{BenAmramYoffe2012}.

These lists provide two primitives.

\paragraph{Locate a leaf.}
Given a node $x$, return a leaf in its subtree or in an adjacent sibling subtree
in $O(1)$ time.  If $x$ is a leaf, return it.  If $x$ is the root, the last
record of $\DFSlist$ is a leaf.  Otherwise, if $x$ has an immediate left sibling
$s$, return the predecessor of $s$ in $\DFSlist$, which is the last leaf in
$x$'s subtree.  If $x$ has no left sibling, return the predecessor of $x$, which
is the last leaf in its immediate right sibling's subtree.  Fullness guarantees
that this right sibling exists.  This is the standard constant-time leaf locator
of~\cite{BenAmramYoffe2011}.

\paragraph{Relink.}
For $z$ with parent $y$ and grandparent $g$, detach the subtree rooted at $z$
from $y$ and attach it to $g$.  The corresponding $\Clist$, $\NLlist$, and
$\DFSlist$ changes are constant-size list splices.  If $z$ has an immediate
left sibling $s$, its subtree is the DFS interval $[z,s)$.  Remove $z$ from
$y$'s child list and insert it immediately before $y$ in $g$'s child list;
detach $[z,s)$ and insert it immediately before $y$ in $\DFSlist$.  Both
interval endpoints are directly available from $z$ and $s$.  If $z$ is the
leftmost child of $y$, move it immediately after $y$ in $g$'s child list and
leave $\DFSlist$ unchanged.  Here before and after refer to the right-to-left
child-list order and the corresponding DFS order.  Update the root's
$\NLlist$ whenever a nonleaf subtree root enters or leaves the root's child
list, or when a root child becomes a leaf.  If the detachment leaves $y$ with
exactly two children, relink those two children as part of the repair; if $y$
becomes a leaf, set $\rank(y)=0$.  If this makes the root's $\NLlist$ empty, set
the root rank to one, or to zero if it is a singleton.  One requested relink
causes at most two additional relinks.

\subsection{The value function}

Fix
\[
  q=\frac32.
\]
For a root $r$, define $\parent(r)=r$.  For every node $v$ define
\begin{equation}
  \val(v)=q^{\rank(\parent(v))},
  \qquad
  \VAL(T)=\sum_{v\in T}\val(v).
  \label{eq:value}
\end{equation}
This function is used only in the proof.

\subsection{Local deletion}

The procedure $\LocalDelete(x)$ first locates a leaf $\ell$.  If the element
$x$ occupies a nonleaf, it exchanges the element associations of $x$'s node
and $\ell$; the node $\ell$ now contains the element to be deleted.  The
procedure then deletes $\ell$.

Whether the input is reduced is also locally decidable.  If the target node is
at depth at least two, the tree cannot be reduced.  Otherwise the root is the
target or its parent, and its rank and $\NLlist$ header are directly available.

For a reduced tree, deleting a leaf leaves another reduced tree; if only the
root remains, its rank is reset to zero.  Deleting the sole node returns an
empty result.  The procedure defers reclaiming that sole node and saves its
owner pointer and root-list back-pointer in a constant-size member token for
the outer layer.  A nonempty reduced result returns its member root as the
corresponding token; a nonreduced result returns no token.  Now suppose the
input tree is full, and let $y$ be the parent of $\ell$ after the leaf is
removed.
\begin{enumerate}[leftmargin=2em]
  \item If $y$ is not the root, repeat at most twice: if $y$ still has a child,
  request a relink of its first child.  The automatic two-child repair in the
  relink primitive may already have moved later candidates, in which case the
  repetition stops.
  \item If $y$ is the root and $\NLlist(y)$ is empty, set its rank to one (or
  zero for a singleton) and return the resulting reduced tree.
  \item If $y$ is the root and $\NLlist(y)$ is nonempty, take any nonleaf child
  $c$ from that list and repeat at most three times, each time relinking the
  first child of $c$ if one remains.  Automatic repair may move up to two more.
  If the root's nonleaf list becomes empty, reset its rank as in the preceding
  item.
\end{enumerate}

The second case is the high-rank-star guard.  It must be tested before asking
for a nonleaf child.  For example, carrying $k\ge5$ singleton roots produces a
rank-two root with $k-1$ leaf children.  Without the guard, deletion of one leaf
would ask an empty $\NLlist$ for a nonleaf child.

\begin{lemma}[Robust local deletion]
\label{lem:local-delete}
Given an element handle in a full or reduced ranked tree, $\LocalDelete$ takes
$O(1)$ worst-case time, removes exactly one node, and preserves
\cref{eq:rank-order} and the full/reduced condition.  If the nonempty output is
not reduced, its root rank is unchanged and its value does not decrease.  If
the output is reduced, its root rank is at most one.  The procedure does not
need the identifier of the logical set containing the tree.
\end{lemma}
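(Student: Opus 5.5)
The plan is to verify the claims of \cref{lem:local-delete} by a case analysis that follows the structure of $\LocalDelete$. The first step is the \emph{time and locality} claims. Locating a leaf, exchanging element associations, and deleting a leaf node are constant-size pointer operations. In the full case the procedure issues at most three requested relinks, and each causes at most two automatic relinks. Every relink is a constant number of splices on $\Clist$, $\NLlist$ and $\DFSlist$ with locally available endpoints. Deciding whether the tree is reduced inspects only the depth of the target and the root's rank and $\NLlist$ header. No step reads an owner or set identifier, and exactly one node, the leaf $\ell$, is removed.

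The second step is the \emph{structural} claims: \cref{eq:rank-order} and the full/reduced condition. A relink moves $z$ from $y$ to the grandparent $g$, and $\rank(g)>\rank(y)>\rank(z)$, so \cref{eq:rank-order} is preserved. Resetting a new leaf to rank $0$ keeps it below its parent. For fullness I would argue locally.
\begin{itemize}
\item Only $y$ and $g$ change child counts, and $g$ only gains children.
\item If $y$ is a non-root and drops below three children, the loop empties it. After two relinks it has at most one child left; if exactly two remain at any point, the automatic repair moves both. So $y$ ends either with at least three children or as a rank-zero leaf.
\item In the root case with nonempty $\NLlist$, the root may have lost a leaf child. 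Moving the children of a nonleaf child $c$ up to the root raises the root's child count: $c$ has at least three children, and relinking them either leaves $c$ with at least three or turns $c$ into a leaf while adding its children to the root. This restores at least three root children.
\end{itemize}
The subtle point, and the one I expect to be the main obstacle, is this count: I must check that three relinks plus automatic repair always compensate for the lost leaf, including when $c$ has exactly three or four children. I would tabulate these small cases explicitly. When $\NLlist$ becomes empty, all root children are leaves, so the tree is a star; resetting the root rank to one (or zero) makes it reduced. This is exactly the high-rank-star guard. The reduced case is immediate.

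The third step is the \emph{value} claim for a nonreduced output. The root rank changes only in the reset branches, which produce reduced outputs, so a nonreduced output keeps its root rank. For $\VAL$, deleting $\ell$ loses $q^{\rank(y)}$. Each relink of $z$ changes $\val(z)$ from $q^{\rank(y)}$ to $q^{\rank(g)}$, a gain of at least $q^{\rank(y)}(q-1)=q^{\rank(y)}/2$, since ranks are integers. Two relinks when $y$ is a non-root therefore recover the lost leaf's value. If fewer relinks occur because $y$ became a leaf, the remaining children were moved by the automatic repair, which also gains. A rank reset of $y$ to zero lowers nothing, since $y$'s own value depends on its parent's rank.

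In the root case the lost leaf had value $q^{R}$, where $R$ is the root rank. Relinking children of $c$ gains at least $q^{R}-q^{R-1}=q^{R}/3$ each, so three relinks recover $q^{R}$. This is exactly why $q=3/2$ and three repetitions are chosen. If $c$ runs out of children earlier, it had at most three children; then all of them moved, and I would check that this again gives at least three gains. The cases where $c$ was left with two children, so that automatic repair fires, need care. Finally, swapping element associations does not change node values, which completes the argument.
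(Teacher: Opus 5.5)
Your proposal is correct and follows essentially the paper's proof: constant-size relinks give the time bound, local child counts at $y$ (or $c$) and at the node receiving the relinked children give the shape claim, and the same $q=3/2$ compensation gives the value claim. You book that compensation per relink, as a gain of at least $q^{\rank(y)}(q-1)$ (resp.\ $q^{R}/3$), where the paper writes the aggregate identities $-3q^{R-1}+2q^{R}=0$ and $-q^{R}-3q^{R-1}+3q^{R}=0$.

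The small cases you defer close at once. After the deletion $y$ keeps at least two children and $c$ has at least three, and each loop stops early only after emptying that node, so at least two (resp.\ three) children always move. Two slips should be fixed. First, $c$ can also be emptied from four or five children through the automatic repair, not only from three. Second, in the nonroot case $y$ never ends with exactly one child (it ends with none), which is what fullness requires.
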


\begin{proof}
The leaf locator, association exchange, list deletion, and every relink touch a
constant number of records.  A requested relink can trigger only the two
children left at its old parent; those moves cannot recursively create another
two-child case at that same parent.  In the deletion procedure at most five
nodes are relinked.  Thus the running time is constant.

Consider shape preservation.  In the nonroot case, immediately after the leaf
deletion the formerly full node $y$ has at least two children.  Moving the
requested children, together with the automatic two-child repair when it
fires, leaves $y$ either a leaf or with at least three children.  Its parent
only gains children.  In the root case with a nonleaf child $c$, the same
argument makes $c$ either a leaf or a node with at least three children, while
the root gains the moved children.  If no nonleaf child remains at the root,
all root children are rank-zero leaves and the explicit rank reset makes the
tree reduced.  Rank order is preserved because a relink moves a subtree to a
strictly higher-rank grandparent; rank is reset only when a node has become a
leaf or the whole tree has become reduced.

It remains to prove the value statement.  Suppose first that $y$ is not the
root, let $g=\parent(y)$, and write $R=\rank(g)$.  Since
$\rank(y)\le R-1$, deleting the leaf and detaching two children loses at most
$3q^{R-1}$.  Reattaching two children to $g$ gains $2q^R$.  Hence the net
change is at least
\[
  -3q^{R-1}+2q^R=0.
\]
Any automatic extra relink only increases value.  If $y$ is the root of rank
$R$ and a nonleaf child is used, deleting the leaf and detaching three
grandchildren loses at most $q^R+3q^{R-1}$, whereas attaching the three
grandchildren to the root gains $3q^R$.  Thus
\[
  -q^R-3q^{R-1}+3q^R=0.
\]
If the root rank is reset, the result is reduced and the lemma does not assert
value monotonicity.  Otherwise the root rank is unchanged, so these local
calculations prove the claim.

Finally, no set identifier is needed in the stable case.  Empty and initially
reduced cases expose the root directly.  If a full tree becomes reduced, the
last affected nonleaf is either the root or its child, so the procedure has the
root in hand.  It may therefore return a member token only in the cases where
the outer representation must be changed.  In the empty case the outer layer
consumes the saved owner and list-record pointers before the node storage is
released, so no reclaimed node is dereferenced.
\end{proof}

\section{Constant-Time Absorption}
\label{sec:absorb}

We next describe how to combine local trees without scanning their nodes.  Let
$p$ be the root of a full receiving tree, and let $T$ be a full or reduced tree
with
\[
  \rank(p)>\rank(\operatorname{root}(T)).
\]
The operation $\Absorb(p,T)$ is defined as follows.
\begin{itemize}[leftmargin=2em]
  \item If $T$ is full, make its root a child of $p$.  A reduced tree with at
  least four nodes is also full and uses this case.
  \item Otherwise $T$ has at most three nodes.  Flatten those nodes into
  rank-zero leaves and make them children of $p$.
\end{itemize}
The first case inserts one child-list record, optionally one nonleaf-list
record, and places the old root first in $p$'s child list while splicing the
cyclic $\DFSlist$ of $T$ immediately after $p$ in the receiver's DFS list.  The
two placements therefore describe the same preorder, and the old root's
ordinary child-list and DFS records remain with the node.  The second case
explicitly touches at most three nodes, clears their old tree links, and inserts
the resulting leaves in matching child-list and DFS order.  In either case the
old member root's private $\NLlist$ header is obsolete and is reclaimed.  At the
cluster layer, the caller first saves the old root-list record, performs the
physical absorption, then unlinks that saved record, clears the old root's owner
and back-pointer fields, and finally reclaims the record.  The caller knows
whether a positive-level input is full.  For a reduced input, inspecting at
most the first three child records distinguishes the full-star case from the
at-most-three-node flattening case in constant time.

\begin{lemma}[Absorption]
\label{lem:absorb}
$\Absorb(p,T)$ takes $O(1)$ worst-case time and preserves fullness and strict
parent-rank order.  The contribution to $\VAL$ of all nodes formerly in $T$
does not decrease.
\end{lemma}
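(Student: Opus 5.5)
We argue by the two cases in the definition of $\Absorb$, checking three things in each: constant time, preservation of shape and rank order, and value monotonicity.

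\emph{Time.} Each case performs a constant number of list operations, as already tallied in the description. The full case inserts one $\Clist$ record, optionally one $\NLlist$ record, and splices one cyclic $\DFSlist$ into another; all splice endpoints are available at $p$ and at the old root. The small case touches at most three nodes. Deciding which case applies costs $O(1)$: the caller knows fullness for positive-level inputs, and otherwise at most three child records are inspected. The cluster-layer bookkeeping (saving, unlinking and clearing the root-list record, reclaiming the obsolete $\NLlist$ header) is also constant. I would state this briefly and move on.

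\emph{Shape and rank order.} In the full case, every node formerly in $T$ keeps its children and its rank, so the only new edge is $p\to\operatorname{root}(T)$. This edge satisfies \cref{eq:rank-order} by the hypothesis $\rank(p)>\rank(\operatorname{root}(T))$. The node $p$ only gains a child, so it still has at least three children and $p$'s tree stays full. There is one check here. A reduced tree with at least four nodes is a rank-one root with at least three rank-zero leaves, so it is genuinely full; its root is then a legitimate internal node. That root is not a root child that is a leaf, so $p$'s $\NLlist$ must receive its record.

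In the small case every moved node gets rank $0$ and becomes a leaf child of $p$. The rank order holds because $p$ is the root of a full tree. If $p$ had rank $0$, its tree would be a singleton, and a singleton cannot receive $T$ under the strict inequality. So $\rank(p)\ge1>0$, and again $p$ only gains children. I would also confirm that the preorder produced by the splice matches the right-to-left child order. This matching is what keeps the leaf locator of \cref{lem:local-delete} valid afterwards, and I would cite the description's placement of the old root first in $p$'s $\Clist$, with its DFS interval immediately after $p$.

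\emph{Value.} This is the step needing care, because $\val$ depends on the parent's rank and the parent of $\operatorname{root}(T)$ changes. In the full case, only the old root's value changes: from $q^{\rank(\operatorname{root}(T))}$, since it was its own parent, to $q^{\rank(p)}$, which is larger because $q>1$. Every other node of $T$ keeps its parent and that parent's rank. In the small case, $T$ has at most three nodes and $\rank(\operatorname{root}(T))\le1$. Before absorption each node contributes at most $q^{\rank(\operatorname{root}(T))}\le q^{1}$. Afterwards each contributes $q^{\rank(p)}\ge q^{\rank(\operatorname{root}(T))+1}$, and this exceeds every old value because every old parent rank is at most $\rank(\operatorname{root}(T))$. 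Hence each node's value individually does not decrease, and summing gives the claim. The only real obstacle is remembering that $\val(\operatorname{root}(T))$ is defined through $\parent(r)=r$, so the root's own contribution must be handled. It increases rather than decreases.
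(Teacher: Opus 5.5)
Your proposal is correct and follows essentially the same argument as the paper. Time comes from the bounded list splices. Shape and rank order come from the single new edge $p\to\operatorname{root}(T)$ in the full case, or from the new leaf children of $p$ in the flatten case. Value monotonicity holds node by node, because no node formerly in $T$ receives a smaller parent rank; this includes the old root, via $\parent(r)=r$. Your version is just more explicit than the paper's terse proof, for example in noting that the strict hypothesis forces $\rank(p)\ge1$ and that a star with at least four nodes must be entered in $p$'s $\NLlist$.
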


\begin{proof}
The structural and time claims follow from the constant-size list operations
above.  In the full case only the old root of $T$ changes parent; its parent rank
strictly increases.  In the small reduced case every resulting leaf has parent
$p$.  Since the only old ranks are zero and one and $\rank(p)$ is strictly
larger than the old root rank, no node receives a smaller parent rank.  Equation
~\eqref{eq:value} then gives value monotonicity.
\end{proof}

During a level-zero carry, the selected receiving root may itself be reduced,
so $\Absorb$'s full-receiver precondition need not hold after the first pointer
change.  We implement that carry atomically by the same constant-time attach or
flatten step for every other member.  At least $k-1\ge3$ nonempty member trees
contribute at least one new child each, so the final receiving root is full.
No intermediate state is exposed to another operation.

\section{Virtual Root Clusters}
\label{sec:clusters}

A logical set $S$ is represented by a record
\[
  (r,\mu,\Roots(S),\mathit{name}(S)),
\]
where $r\ge0$ is its \emph{level} and
$\Roots(S)=(T_1,\ldots,T_\mu)$ is a doubly linked list of pairwise node-disjoint
member trees satisfying
\[
  1\le\mu<k.
\]
There is no physical super-root.  Each member root stores an owner pointer to
$S$ and a back-pointer to its root-list record.  Consequently, reaching a
member root during find requires only one additional pointer access to return
the logical set identifier.  Root-list concatenation and deletion are constant
time and update the stored one-word count $\mu$ in constant time, while changing
all owner pointers costs $O(k)$ because there are fewer than $k$ roots.  Only
current member roots have nonnull owner and root-list pointers.  When a member
root becomes internal, those fields are cleared and its obsolete root-list
record is reclaimed; a surviving member root always points to its unique live
record.

The following certificate is the heart of the construction.

\begin{invariant}[Root-cluster certificate]
\label{inv:cluster}
For every member tree $T\in\Roots(S)$ of a level-$r$ logical set:
\begin{enumerate}[label=(\roman*),leftmargin=2em]
  \item $\VAL(T)\ge k^r$;
  \item if $r>0$, then $T$ is full and its root has rank exactly $r+1$;
  \item if $r=0$, then $T$ is reduced and its root has rank at most one.
\end{enumerate}
\end{invariant}

The representation intentionally permits a large reduced star at level zero.
Treating such a tree as only one unit of level-zero mass may postpone a later
carry, but it cannot make a query path longer.

\section{Operations}
\label{sec:operations}

We now give the complete operations for $k\ge4$.

\subsection{\texorpdfstring{$\MakeSet$}{MakeSet}}

Create a rank-zero node occupied by $x$, make it the only member tree of a new
level-zero set, store member count one, and install its owner and root-list
pointers.  Its value is $1=k^0$.

\subsection{\texorpdfstring{$\Union$}{Union}}

Let $A$ and $B$ be the input set records, and let $C$ be the new set record.
Write $r_A,r_B$ for their levels and $\mu_A,\mu_B$ for their member counts.

\paragraph{Different levels.}
Assume without loss of generality that $r_A>r_B$.  Choose the first member root
$p$ of $A$.  Because $r_A>0$, $p$ is full.  Absorb each of the fewer than $k$
member trees of $B$ into $p$.  Keep the member-root list and level of $A$, and
change the owner of every surviving member root to $C$.  Set its stored member
count to $\mu_A$; the root-list records removed from $B$ are reclaimed by their
absorptions.  Transfer the retained list header to $C$ and reclaim the input set
records $A$ and $B$.

\paragraph{Equal levels without a carry.}
Let $r=r_A=r_B$ and $t=\mu_A+\mu_B$.  If $t<k$, concatenate the two member-root
lists, keep level $r$, store member count $t$, and change the $t$ owner pointers
to $C$.  No physical parent edge is added.  Transfer the concatenated list
header to $C$ and reclaim the input set records $A$ and $B$.

\paragraph{Equal levels with a carry.}
If $t\ge k$, choose one member root $p$, set its rank to $r+2$, and absorb or
atomically attach/flatten the remaining $t-1$ member trees into it as described
in \cref{sec:absorb}.  Since
\[
  k\le t\le2k-2,
\]
this touches $O(k)$ roots.  The result has one member tree and level $r+1$.
Install $C$ as the owner of $p$, set its member count to one, retain or create
its unique root-list record, and reclaim the two input set records and all other
obsolete root-list records.

\subsection{\texorpdfstring{$\Find$}{Find}}

Starting at the node occupied by $x$, follow ordinary parent pointers to its
member-tree root, then follow the root's owner pointer and return that logical
set identifier.  Path compression is not required for the worst-case theorem.

\subsection{\texorpdfstring{$\Delete$}{Delete}}

Invoke $\LocalDelete(x)$ in the member tree containing $x$.
\begin{enumerate}[leftmargin=2em]
  \item If the result is nonempty and not reduced, do nothing at the cluster
  layer.
  \item If the result is empty, use the saved member token to remove that root-list
  record, decrement the member count, and then reclaim the deleted root.  If no
  member remains, destroy the empty logical set.
  \item Suppose the old logical level was positive and the result is reduced.
  If another member remains, use its root as $p$, perform one $\Absorb(p,T)$,
  remove $T$ from the member list, and decrement the member count.  If $T$ was
  the unique member, keep the same set identifier but replace its representation
  by the level-zero cluster $(0,1,\{T\},\mathit{name})$.
  \item At logical level zero, a nonempty reduced output remains a valid member
  without any cluster change.
\end{enumerate}
The local module returns a member token exactly in the empty or reduced cases,
so the outer action never performs a find or scans a parent path.  Deletion
never changes the identifier of a surviving logical set.

\section{Correctness}
\label{sec:correctness}

\begin{lemma}[Union preserves the certificate]
\label{lem:union-certificate}
Every branch of $\Union$ preserves \cref{inv:cluster}.
\end{lemma}

\begin{proof}
For different levels, the member trees of the higher-level set other than the
receiver are unchanged.  The receiver gains nodes through value-monotone
absorptions, and its root rank does not change.  Its certificate therefore
continues to hold.  The absorbed low-level trees are no longer members and need
no individual certificate.

For equal levels without a carry, no physical tree changes, so every individual
certificate is inherited.

For a carry, each of the $t$ inputs has value at least $k^r$.  Raising the rank
of $p$, attaching roots below a higher-rank parent, and flattening only small
reduced trees under that parent do not decrease any contribution.  Therefore
the new member tree $T_p$ satisfies
\[
  \VAL(T_p)
  \ge \sum_{i=1}^{t}\VAL(T_i)
  \ge t k^r
  \ge k^{r+1}.
\]
Its root rank is $r+2=(r+1)+1$, as required for new level $r+1$.  If $r>0$,
the receiver was already full.  If $r=0$, at least $k-1\ge3$ nonempty input
trees provide new children, so the final tree is full.  The one-member result
also satisfies the root-count condition.
\end{proof}

\begin{lemma}[Deletion preserves the certificate]
\label{lem:delete-certificate}
$\Delete$ preserves \cref{inv:cluster} and the logical partition of the
remaining live elements.
\end{lemma}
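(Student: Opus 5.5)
The proof goes by cases on the outcome of $\LocalDelete$, following the four cluster-layer branches of $\Delete$. Throughout, \cref{lem:local-delete} is the main tool: it says exactly one node is removed, that node holds the deleted element (after the association exchange), and shape and rank order are preserved. Partition correctness follows at once. The only live element removed is $x$, and every other element stays in a tree owned by the same logical set, since the tree keeps its owner pointer or is absorbed into another member root of the same set. No set identifier changes. The certificate then only needs checking for the member tree $T$ that contained $x$, plus a receiver $p$ if one is used; all other member trees are physically untouched.

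\textbf{Case analysis.}
\begin{enumerate}[leftmargin=2em]
  \item \emph{Output nonempty and not reduced.} This can only happen at positive level: at level zero, $T$ is reduced by \cref{inv:cluster}(iii), and deleting from a reduced tree yields a reduced tree. By \cref{lem:local-delete} the root rank stays $r+1$ and $\VAL$ does not decrease, so (i) and (ii) continue to hold.
  \item \emph{Output empty.} The member disappears. The count $\mu$ decreases, which preserves $\mu<k$. If $\mu$ drops to zero, the set is destroyed. Otherwise the remaining members still satisfy $\mu\ge1$.
  \item \emph{Positive level and reduced output.} First suppose another member $p$ exists. By (ii), $p$ is full with $\rank(p)=r+1\ge2$. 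By \cref{lem:local-delete}, the reduced output has root rank at most one. So the precondition of \cref{lem:absorb} holds, and absorption keeps $p$ full with unchanged root rank and nondecreasing $\VAL(p)\ge k^r$. The count drops by one and stays $\ge1$. Now suppose $T$ was the unique member. The new level-zero cluster needs $\VAL(T)\ge k^0=1$, which holds because $T$ is nonempty. It also needs $T$ reduced with root rank $\le1$, which \cref{lem:local-delete} guarantees.
  \item \emph{Level zero and reduced nonempty output.} Condition (iii) holds by \cref{lem:local-delete}, and $\VAL\ge1$ because every node contributes $q^{\cdot}\ge1$.
\end{enumerate}

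\textbf{Main obstacle.} The delicate point is the positive-level case where the output is reduced. The value certificate $k^r$ may genuinely be lost there, because \cref{lem:local-delete} asserts no monotonicity once the rank is reset. So we cannot keep $T$ as a member at level $r$, and we must check that the repair never needs the lost value.

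When another member exists, nothing needs $T$'s value: $T$ is absorbed and $p$'s own certificate persists. When $T$ is the unique member, dropping to level zero lowers the required certificate to $1$, which any nonempty tree satisfies.

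One more point needs care: a reduced output with at least four nodes that is simultaneously full. It has root rank one, not $r+1\ge2$, so it is correctly routed to the reduced branch rather than branch 1. I would confirm this using the statement of \cref{lem:local-delete}: a nonreduced output keeps its root rank, whereas this output had its rank reset to one, so it is reported as reduced.
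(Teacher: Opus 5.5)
Your proposal is correct and follows the same case analysis as the paper's proof: nonreduced output, empty member, positive-level reduced output with or without another member, and level-zero output, relying on \cref{lem:local-delete} and \cref{lem:absorb} in the same way. You also spell out two details the paper leaves implicit: that $\rank(p)=r+1\ge 2$ exceeds the reduced output's root rank, so the precondition of $\Absorb$ holds, and that the member count stays within $1\le\mu<k$.
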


\begin{proof}
The association swap changes no set membership; the local procedure then
removes exactly the requested element and one physical node.  If its output is
not reduced, \cref{lem:local-delete} preserves the root rank and does not
decrease value, so the old member certificate remains valid.

An empty member is removed in constant time.  If other members remain, their
certificates are untouched.  If none remains, the logical set is empty and is
destroyed.

If a positive-level member becomes reduced, it no longer promises $k^r$ mass.
With another member present, the latter is a positive-level full tree and
absorbs the reduced tree without losing value; its old certificate remains
valid.  With no other member, changing the logical level to zero is sound
because every nonempty tree has $\VAL\ge1=k^0$, and the same set record and name
are retained.  A level-zero local deletion already returns a reduced tree and
therefore preserves all level-zero conditions.
\end{proof}

\begin{lemma}[Find is semantically correct]
\label{lem:find-correct}
$\Find(x)$ returns exactly the identifier of the logical set containing the
live element $x$.
\end{lemma}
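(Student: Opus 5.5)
We prove \cref{lem:find-correct} by an induction over the operation sequence that maintains a single \emph{representation invariant}. Its statement is as follows. For every logical set $S$ and every live element $x\in S$, the node occupied by $x$ lies in some member tree $T\in\Roots(S)$. The parent pointers inside $T$ form a rooted tree, so they terminate at $\operatorname{root}(T)$. Finally, $\operatorname{root}(T)$ has its owner pointer equal to the current record of $S$, and the name of that record is $\mathit{name}(S)$. Given this invariant, $\Find(x)$ walks upward from its node; it must reach the root because ranks strictly increase along parent pointers by \cref{eq:rank-order}. It then returns the owner's identifier, which is exactly the claim. It remains to check the invariant after each operation.

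The check proceeds operation by operation. $\MakeSet$ installs the owner pointer explicitly. For $\Union$, all three branches are handled separately.
\begin{itemize}
\item \emph{Different levels.} Every absorbed tree of $B$ becomes a subtree under the root $p$ of $A$. This follows from \cref{lem:absorb}: in the full case the old root becomes a child of $p$, and in the flatten case every node becomes a child of $p$. So every former element of $B$ now reaches a member root of the retained list, and all of those roots have their owner reset to $C$.
\item \emph{Equal levels without a carry.} The lists are concatenated and all $t<k$ owners are rewritten to $C$.
\item \emph{Carry.} The $t-1$ trees attach or flatten under $p$, and $p$'s owner becomes $C$.
\end{itemize}
In every branch, old roots that became internal have their owner and back-pointer fields cleared and their records reclaimed. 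This is needed so that no stale owner pointer is ever reached. Moreover, $\Find$ consults an owner pointer only at a node whose parent pointer is itself, and the paper defines $\parent(r)=r$ only for roots. Hence a cleared internal node is never used as an endpoint, and the destroyed identifiers $A,B$ are never returned.

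For $\Delete$ we use \cref{lem:local-delete} and \cref{lem:delete-certificate}. The association exchange moves the surviving element into a node of the same member tree, so it still reaches the same root. The local repairs only relink subtrees to grandparents inside the same tree, so the root is unchanged. The outer cases are then checked one by one.
\begin{itemize}
\item A nonreduced result leaves the root and its owner untouched.
\item An empty result removes a tree that contains no live element.
\item A reduced result absorbed into another member $p$ reroutes its elements to $p$, whose owner is already $S$.
\item A unique reduced member keeps the same set record and name.
\end{itemize}
The same set record and name are kept in every case, so owner pointers remain valid.

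The main obstacle is the claim that the root of the member tree is preserved under local repair. In the reduced-result case the root's rank may be reset, and we must be sure that no relink ever moves a node above the old root or detaches a subtree into a separate component. I would argue this from the definition of Relink: it always has a grandparent $g$, so the root itself is never relinked. The swap of element associations also never changes which physical tree a node belongs to.
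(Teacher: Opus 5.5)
Your proposal is correct and takes essentially the same approach as the paper. The paper's proof is a compressed form of your representation invariant: member trees partition the nodes of their logical set, and every union or deletion branch either keeps a member root and points its owner at the current record, or makes an obsolete root internal and clears its owner status. Your explicit operation-by-operation induction only adds detail the paper leaves implicit: termination via rank order, Relink never moving the root, and the association exchange staying inside one member tree. The level-zero reduced-output case, which you do not list separately, is covered by that same root-preservation argument.
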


\begin{proof}
Member trees are node-disjoint and partition the nodes occupied by the elements
of their logical set.  Union either keeps a member root and updates its owner,
or makes an obsolete member root internal and removes its owner status.
Deletion does the same when it absorbs a reduced member.  Thus the unique root
reached by following parent pointers from $x$ is a current member root, and its
owner is exactly the logical set containing $x$.
\end{proof}

\section{Worst-Case Bounds}
\label{sec:bounds}

\begin{lemma}[Current-size level bound]
\label{lem:level-bound}
Let $T$ be a member tree of level $r$ containing $n_T$ nodes.  For $k\ge4$,
\[
  r=O\!\left(1+\frac{\log n_T}{\log k}\right).
\]
\end{lemma}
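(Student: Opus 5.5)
We sandwich $\VAL(T)$ between the certificate from \cref{inv:cluster} and an upper bound in terms of $n_T$ and the root rank. If $r=0$ there is nothing to prove, so assume $r>0$. Then by \cref{inv:cluster}(ii) the root of $T$ has rank exactly $r+1$, and by \cref{inv:cluster}(i) we have $\VAL(T)\ge k^r$.

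For the upper bound we use only the rank order \cref{eq:rank-order}. Every node $v$ satisfies $\rank(\parent(v))\le\rank(\operatorname{root}(T))$: for a root this is trivial, and otherwise the ranks strictly increase along the path to the root. Hence each term of \cref{eq:value} is at most $q^{r+1}$, and
\[
  k^r\le\VAL(T)\le n_T\,q^{r+1}.
\]

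Taking logarithms gives $r\log k\le\log n_T+(r+1)\log q$, that is,
\[
  r\,(\log k-\log q)\le\log n_T+\log q.
\]
The hypothesis $k\ge4$ is used here. Since $q=3/2$, we have $\log k-\log q\ge\log k-\tfrac12\log 4\ge\tfrac12\log k$, because $q\le\sqrt{k}$ when $k\ge4$ (indeed $q\le2\le\sqrt k$). Therefore
\[
  r\le\frac{2(\log n_T+\log q)}{\log k}=O\!\left(1+\frac{\log n_T}{\log k}\right).
\]

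The one step needing care is the crude upper bound on $\VAL$, which must bound every node's parent rank by the root rank $r+1$ rather than by anything depending on $k$. Here the constant base $q$, rather than $k$, is what makes the argument work: the loss is the factor $q^{r+1}$, which the gap between $k$ and $q$ absorbs. No bound on the shape of the tree (fullness or degree) is needed beyond \cref{eq:rank-order}.
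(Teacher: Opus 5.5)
Your proof is correct and follows the paper's argument essentially step for step. Both sandwich $\VAL(T)$ between the certificate $k^r$ and the crude bound $n_T q^{r+1}$, which comes from the root rank $r+1$, and then solve for $r$. The only difference is that you make the paper's claim $\log(k/q)=\Theta(\log k)$ explicit via $q\le\sqrt{k}$ for $k\ge4$, a harmless and welcome clarification.
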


\begin{proof}
The level-zero case is immediate.  For $r>0$, the root rank is $r+1$.  Every
node's parent rank is at most the root rank, so
\[
  \VAL(T)\le n_Tq^{r+1}.
\]
Together with \cref{inv:cluster}, this yields
\[
  k^r\le n_Tq^{r+1}
  \quad\Longrightarrow\quad
  n_T\ge q^{-1}\left(\frac{k}{q}\right)^r.
\]
Hence
\[
  r\le \frac{\log(qn_T)}{\log(k/q)}.
\]
For $k\ge4$ and fixed $q=3/2$, $\log(k/q)=\Theta(\log k)$, proving the claim.
\end{proof}

\begin{proof}[Proof of \cref{thm:main}]
For $k\ge4$, parent ranks are strictly increasing nonnegative integers.  If
$r>0$, a level-$r$ member has root rank $r+1$; at level zero its root rank is at
most one.  In either case its physical height is at most $r+1$.
If the logical set has $n$ elements, then $n_T\le n$; by
\cref{lem:level-bound}, following the path and one owner pointer costs
$O(1+\log n/\log k)$.

$\MakeSet$ initializes a constant number of records.  A union handles at most
$2k-2$ member roots, performs at most $2k-3$ constant-time absorptions, and
rewrites fewer than $k$ surviving owner pointers; it costs $O(k)$.  By
\cref{lem:local-delete}, deletion performs a constant number of local changes
and at most one absorption, independent of $k$; it costs $O(1)$.

There is exactly one physical tree node per live element.  Parent edges,
child-list records, and DFS-list records use constant space per node.  Each
member root accounts for at least one node, so root records and owner pointers
are also $O(N)$.  Every nonempty logical set contains at least one live element,
so its set record also charges to a node.  Destroyed set records and obsolete
list headers are reclaimed in constant time.  Thus the total space is $O(N)$
and, by the same argument, the representation of each logical set is
proportional to its current size.

The $k=2,3$ cases follow from the known constant-deletion endpoint as discussed
after the theorem statement.
\end{proof}

In the standard logarithmic-word setting, the classical lower bound is
$t_q=\Omega(\log n/\log t_u)$~\cite{AlstrupBenAmramRauhe1999}; taking
$t_u=\Theta(k)$ gives a matching union/find tradeoff for $2\le k\le n$.
Earlier forms of the lower bound appear in
\cite{FredmanSaks1989,BenAmramGalil2001}.  Since executions without deletions
are a special case of the present problem, the lower bound applies here as
well.  Deletion is constant even though it receives no set identifier.

\section{Discussion}

The virtual root cluster is deliberately lossy.  When a high-level member
becomes reduced, the construction may discard its level certificate even if
the reduced star still contains many elements.  This cannot hurt the query
bound: it only postpones future carries, and a reduced tree itself has height at
most one.  This observation is what allows deletion to remain independent of
$k$.

The distinction between physical rank and logical level is equally important.
Physical ranks control local height and make the $q$-value repair work with a
constant branching threshold.  Logical levels decide when $k$ independent
mass certificates justify one rank increase.  Trying to use a single notion
for both purposes either recovers only a binary $2^r$ certificate or makes a
local deletion repair inspect $\Theta(k)$ children.

We have not incorporated path compression or path splitting into the root
cluster analysis.  The local value module is compatible with suitable relink
operations, but the classical inverse-Ackermann analysis also depends on how
ranks are generated.  Establishing a current-set-size amortized bound while
retaining the complete worst-case tradeoff is therefore a separate question.

\section{Conclusion}

Virtual accumulation of fewer than $k$ physical roots decouples base-$k$ rank
growth from constant-degree local deletion repair.  The resulting data
structure supports the full classical worst-case union/find tradeoff together
with constant worst-case deletion and current-size linear space.  The proof
uses only local ranked-tree operations, a direct owner pointer at every member
root, and the elementary inequality obtained by comparing a $k^r$ certificate
with a $(3/2)^{r+1}$ per-node upper bound.

\bibliographystyle{alpha}
\bibliography{references}

@inproceedings{KaplanShafrirTarjan2002,
  author    = {Haim Kaplan and Nira Shafrir and Robert E. Tarjan},
  title     = {Union-Find with Deletions},
  booktitle = {Proceedings of the Thirteenth Annual ACM--SIAM Symposium on Discrete Algorithms},
  pages     = {19--28},
  publisher = {SIAM},
  year      = {2002}
}

@techreport{AlstrupGortzRauheThorup2003,
  author      = {Stephen Alstrup and Inge Li G{\o}rtz and Theis Rauhe and Mikkel Thorup},
  title       = {Worst-Case Union-Find with Fast Deletions},
  institution = {IT University of Copenhagen},
  number      = {TR-2003-25},
  year        = {2003},
  month       = dec
}

@inproceedings{AlstrupGortzRauheThorupZwick2005,
  author    = {Stephen Alstrup and Inge Li G{\o}rtz and Theis Rauhe and Mikkel Thorup and Uri Zwick},
  title     = {Union-Find with Constant Time Deletions},
  booktitle = {Automata, Languages and Programming},
  series    = {Lecture Notes in Computer Science},
  volume    = {3580},
  pages     = {78--89},
  publisher = {Springer},
  year      = {2005}
}

@article{AlstrupThorupGortzRauheZwick2014,
  author  = {Stephen Alstrup and Mikkel Thorup and Inge Li G{\o}rtz and Theis Rauhe and Uri Zwick},
  title   = {Union-Find with Constant Time Deletions},
  journal = {ACM Transactions on Algorithms},
  volume  = {11},
  number  = {1},
  articleno = {6},
  pages   = {6:1--6:28},
  year    = {2014},
  doi     = {10.1145/2636922}
}

@article{BenAmramYoffe2011,
  author  = {Amir M. Ben-Amram and Simon Yoffe},
  title   = {A Simple and Efficient Union--Find--Delete Algorithm},
  journal = {Theoretical Computer Science},
  volume  = {412},
  number  = {4--5},
  pages   = {487--492},
  year    = {2011},
  doi     = {10.1016/j.tcs.2010.11.005}
}

@article{BenAmramYoffe2012,
  author  = {Amir M. Ben-Amram and Simon Yoffe},
  title   = {Corrigendum to ``A Simple and Efficient Union--Find--Delete Algorithm''},
  journal = {Theoretical Computer Science},
  volume  = {423},
  pages   = {75},
  year    = {2012},
  doi     = {10.1016/j.tcs.2011.12.074}
}

@inproceedings{AlstrupBenAmramRauhe1999,
  author    = {Stephen Alstrup and Amir M. Ben-Amram and Theis Rauhe},
  title     = {Worst-Case and Amortised Optimality in Union-Find},
  booktitle = {Proceedings of the Thirty-First Annual ACM Symposium on Theory of Computing},
  pages     = {499--506},
  publisher = {ACM},
  year      = {1999}
}

@article{BenAmramGalil2001,
  author  = {Amir M. Ben-Amram and Zvi Galil},
  title   = {A Generalization of a Lower Bound Technique Due to {Fredman} and {Saks}},
  journal = {Algorithmica},
  volume  = {30},
  number  = {1},
  pages   = {34--66},
  year    = {2001}
}

@inproceedings{FredmanSaks1989,
  author    = {Michael L. Fredman and Michael E. Saks},
  title     = {The Cell Probe Complexity of Dynamic Data Structures},
  booktitle = {Proceedings of the Twenty-First Annual ACM Symposium on Theory of Computing},
  pages     = {345--354},
  publisher = {ACM},
  year      = {1989}
}

@article{Blum1986,
  author  = {Norbert Blum},
  title   = {On the Single-Operation Worst-Case Time Complexity of the Disjoint Set Union Problem},
  journal = {SIAM Journal on Computing},
  volume  = {15},
  number  = {4},
  pages   = {1021--1024},
  year    = {1986}
}

@techreport{Smid1988,
  author      = {Michiel H. M. Smid},
  title       = {A Data Structure for the Union-Find Problem Having Good Single-Operation Complexity},
  institution = {Centre for Mathematics and Computer Science, Amsterdam},
  number      = {CT-1988-06},
  year        = {1988},
  note        = {Also appeared in ALCOM Algorithms Review 1 (1990)}
}

@article{Tarjan1975,
  author  = {Robert E. Tarjan},
  title   = {Efficiency of a Good but Not Linear Set Union Algorithm},
  journal = {Journal of the ACM},
  volume  = {22},
  number  = {2},
  pages   = {215--225},
  year    = {1975}
}

@article{TarjanVanLeeuwen1984,
  author  = {Robert E. Tarjan and Jan van Leeuwen},
  title   = {Worst-Case Analysis of Set Union Algorithms},
  journal = {Journal of the ACM},
  volume  = {31},
  number  = {2},
  pages   = {245--281},
  year    = {1984}
}

@phdthesis{Gortz2005,
  author = {Inge Li G{\o}rtz},
  title  = {Topics in Algorithms: Data Structures on Trees and Approximation Algorithms on Graphs},
  school = {IT University of Copenhagen},
  year   = {2005},
  month  = may
}

\end{document}